\documentclass[12pt]{l4dc2021} 


\usepackage{caption}
\captionsetup[table]{font=footnotesize}
\captionsetup[figure]{font=footnotesize,labelsep=period}




    


\newcommand{\ucmathlist}{%
    \def\alpha{\mathrm{A}}%
    \def\beta{\mathrm{B}}%
    \let\gamma=\Gamma
    \let\delta=\Delta
    \def\epsilon{\mathrm{E}}%
    \def\varepsilon{\mathrm{E}}%
    \def\zeta{\mathrm{Z}}%
    \def\eta{\mathrm{H}}%
    \let\theta=\Theta
    \let\vartheta=\Theta
    \def\iota{\mathrm{I}}%
    \def\kappa{\mathrm{K}}%
    \let\lambda=\Lambda
    \def\mu{\mathrm{M}}%
    \def\nu{\mathrm{N}}%
    \let\xi=\Xi
    \let\pi=\Pi
    \let\varpi=\Pi
    \def\rho{\mathrm{P}}%
    \def\varrho{\mathrm{P}}%
    \let\sigma=\Sigma
    \def\tau{\mathrm{T}}%
    \let\upsilon=\Upsilon
    \let\phi=\Phi
    \let\varphi=\Phi
    \def\chi{\mathrm{X}}%
    \let\psi=\Psi
    \let\omega=\Omega
}


\makeatletter
\def\renewtheorem#1{%
    \expandafter\let\csname#1\endcsname\relax
    \expandafter\let\csname c@#1\endcsname\relax
    \gdef\renewtheorem@envname{#1}
    \renewtheorem@secpar
}
\def\renewtheorem@secpar{\@ifnextchar[{\renewtheorem@numberedlike}{\renewtheorem@nonumberedlike}}
\def\renewtheorem@numberedlike[#1]#2{\newtheorem{\renewtheorem@envname}[#1]{#2}}
\def\renewtheorem@nonumberedlike#1{  
    \def\renewtheorem@caption{#1}
    \edef\renewtheorem@nowithin{\noexpand\newtheorem{\renewtheorem@envname}{\renewtheorem@caption}}
    \renewtheorem@thirdpar
}
\def\renewtheorem@thirdpar{\@ifnextchar[{\renewtheorem@within}{\renewtheorem@nowithin}}
\def\renewtheorem@within[#1]{\renewtheorem@nowithin[#1]}
\makeatother
\input{auxFiles/berkeleyColors.sty}
\input{auxFiles/symbolDef.sty}


\title[Graph neural networks for distributed linear-quadratic control]{Graph Neural Networks for Distributed Linear-Quadratic Control}
\usepackage{times}

\author{\Name{Fernando Gama\nametag{\thanks{This work was supported by grants from ONR, NSF and AFOSR. }}} \Email{fgama@berkeley.edu}\and
    \Name{Somayeh Sojoudi} \Email{sojoudi@berkeley.edu}\\
    \addr Electrical Engineering and Computer Sciences Dept., University of California, Berkeley, CA 94709, USA}




\begin{document}

\maketitle

\begin{abstract}%
The linear-quadratic controller is one of the fundamental problems in control theory. The optimal solution is a linear controller that requires access to the state of the entire system at any given time. When considering a network system, this renders the optimal controller a \emph{centralized} one. The interconnected nature of a network system often demands a \emph{distributed} controller, where different components of the system are controlled based only on local information. Unlike the classical centralized case, obtaining the optimal distributed controller is usually an intractable problem. Thus, we adopt a graph neural network (GNN) as a parametrization of distributed controllers. GNNs are naturally local and have distributed architectures, making them well suited for learning nonlinear distributed controllers. By casting the linear-quadratic problem as a self-supervised learning problem, we are able to find the best GNN-based distributed controller. We also derive sufficient conditions for the resulting closed-loop system to be stable. We run extensive simulations to study the performance of GNN-based distributed controllers and showcase that they are a computationally efficient parametrization with scalability and transferability capabilities.%
\end{abstract}

\begin{keywords}%
Distributed control, linear-quadratic control, graph neural networks
\end{keywords}


\section{Introduction} \label{sec:intro}



Undoubtedly, linear dynamical systems are the cornerstone of countless information processing algorithms in a wide array of areas, including physics, mathematics, engineering and economics \citep{Kailath80-LinearSystems, Rugh96-LinearSystems}. Therefore, the ability to optimally control these systems is of paramount importance \citep{Kwakernaak72-LinearOptimal, Bertsekas05-DynamicProgramming}. Of particular interest is the case when linear systems are coupled with a quadratic cost, giving rise to the well-studied linear-quadratic control problem \citep{AndersonMoore89-LQR, Makila87-ComputationalLQ, Dean20-SampleLQR}. As it happens, the optimal linear-quadratic controller is \emph{linear} and acts on the knowledge of the system state at a given time to produce the optimal control action for that time instant.

A special class of dynamical systems that has gained widespread attention are network systems. These systems are comprised of a set of interconnected components, capable of exchanging information and equipped with the ability to autonomously decide on an action to take. The objective is to coordinate the individual actions of the components so that they are conducive to the accomplishment of some global task \citep{Bamieh02-DistributedControl, Nayyar13-DecentralizedStochastic, Fattahi19-SingleSample}.

Linear network systems that seek to minimize a global quadratic cost can be easily controlled if we allow for a \emph{centralized} approach. That is, if we assume that all components have instantaneous access to the state of all other components, they can readily compute the optimal control action. Furthermore, such an optimal centralized controller is linear. Computing optimal centralized controllers, however, face limitations in terms of scalability and implementation.

The interconnected nature of network systems naturally imposes a distributed data structure. While this structure may affect the linear dynamics of the system, or even the objective quadratic cost, we are predominantly interested in leveraging the data structure to obtain \emph{distributed} controllers. These are control actions that depend only on local information provided by components that share a connection and that can be computed separately by each component.

Imposing a distributed constraint on the linear-quadratic control problem renders it intractable in the most general case \citep{Witsenhausen68-Counterexample, Tsitsiklis84-DecentralizedComplexity}. While there is a large class of distributed control problems that admit a convex formulation \citep{Rotkowitz06-DecentralizedConvex}, many of them lead to complex solutions that do not scale with the size of the network \citep{Tanaka14-TriangularLQG, Fattahi19-TransformCentralized}. An alternative approach is to adopt a linear parametrization of the controller and find a surrogate of the original problem that admits a scalable solution. The resulting controller is thus a sub-optimal linear distributed controller. Suboptimality guarantees are often obtained, and stability and robustness analysis of the resulting controller are provided \citep{Fazelnia17-LowerBoundLQR, Matni19-SystemLevelApproach, Fattahi19-LQR}

However, even in the context of linear network system with a quadratic cost, the optimal distributed controller may not be linear \citep{Witsenhausen68-Counterexample}. In this paper, we thus adopt a nonlinear parametrization of the controller. More specifically, we focus on the use of graph neural networks (GNNs) \citep{Bronstein17-GeometricDeepLearning, Gama20-GNNs}. GNNs consist of a cascade of blocks (commonly known as layers) each of which applies a bank of graph filters followed by a pointwise nonlinearity \citep{Bruna14-DeepSpectralNetworks, Defferrard17-ChebNets, Gama19-Archit}. GNNs exhibit several desirable properties in the context of distributed control. Most importantly, they are naturally local and distributed, meaning that by adopting a GNN as a mapping between states and actions, a distributed controller is automatically obtained. Furthermore, they are permutation equivariant and Lipschitz continuous to changes in the network \citep{Gama20-Stability}. These two properties allow them to scale up and transfer to unknown networks \citep{Ruiz20-Transferability}.

Distributed controllers leveraging neural network techniques can be found in \citep{Capella03-DistributedNN, Huang05-LargeScaleDecentralized, Srinivasan06-ContinuousOnlineLearning, Chen13-DecentralizedPID, Liu15-ContinuousTimeUnknown, Yang17-FormationControlRBF, Tolstaya19-Flocking, Li20-Planning}. These controllers typically use a distinct multi-layer perceptron (MLP) to parametrize the controller at each component \citep{Capella03-DistributedNN, Huang05-LargeScaleDecentralized, Srinivasan06-ContinuousOnlineLearning, Chen13-DecentralizedPID, Liu15-ContinuousTimeUnknown, Yang17-FormationControlRBF}, while GNNs are leveraged in \citep{Tolstaya19-Flocking, Li20-Planning} in the context of robotics problems.

The main contributions of this work are: (i) the use of graph neural networks (Section~\ref{sec:GNN}) as a practically useful nonlinear parametrization of the distributed controller for a linear-quadratic problem (Section~\ref{sec:LQR}), (ii) casting the linear-quadratic problem as a self-supervised learning problem that can be solved by traditional machine learning techniques [cf. \eqref{eq:decentralizedERM}], (iii) a sufficient condition for the resulting GNN-based controller to stabilize the system (Proposition~\ref{prop:stability}), and (iv) a comprehensive numerical simulation investigating the performance of GNN-based distributed controllers and its dependence on design hyperparameters, as well as its scalability and transferability (Section~\ref{sec:sims}). 




\section{Distributed Linear-Quadratic Controllers} \label{sec:LQR}



Consider a system of $N$ components, each one described at time $t \in \{0,1,2,\ldots,T\}$ by a state vector $\vcx_{i}(t) \in \fdR^{F}$ for $i=1,\ldots,N$. These components are also equipped with the ability to take an action $\vcu_{i}(t) \in \fdR^{G}$ that can influence future values of its own state, as well as the states of other components in the system, as determined by some given dynamic model . The state and the control actions can be compactly described by matrices $\mtX(t) \in \fdR^{N \times F}$, $\mtU(t) \in \fdR^{N \times G}$, respectively, with
\begin{equation} \label{eq:stateMatrix}
    \mtX(t) = 
        \begin{bmatrix} 
            \vcx_{1}(t)^{\Tr} \\ 
            \vdots            \\
            \vcx_{N}(t)^{\Tr}
        \end{bmatrix}
\end{equation}
and analogously for $\mtU(t)$, i.e. the $i^{\text{th}}$ row of $\mtU(t)$ is the action $\vcu_{i}(t) \in \fdR^{G}$ taken by component $i$ at time $t$. We are particularly interested in linear system dynamics, modeled as \citep[Ch. 6]{Kailath80-LinearSystems}
\begin{equation} \label{eq:linearSystem}
    \mtX(t+1) = \mtA \mtX(t) + \mtB \mtU(t) \mtbB
\end{equation}
with the given matrix $\mtA \in \fdR^{N \times N}$ known as the system matrix, and the matrices $\mtB \in \fdR^{N \times N}$ and $\mtbB \in \fdR^{G \times F}$ known as the control matrices. We note that $F$ is the dimension of the state vector at each component and $G$ is the dimension of the control action executed by each component. In this context, the matrix $\mtbB$ acts as a linear map between the $G$ values of each individual control action $\vcu_{i}(t)$ and the $F$ values of each individual state $\vcx_{i}(t)$.

A linear-quadratic regulator (LQR) is a controller that minimizes a quadratic cost on the states and the control actions \citep[Sec. 4.1]{Bertsekas05-DynamicProgramming}:
\begin{equation} \label{eq:quadraticCost}
    \fnJ \Big( \{\mtX(t)\}_{t=0}^{T}, \{\mtU(t)\}_{t=0}^{T-1} ;
                 \mtQ, \mtR \Big)
     = \big\| \mtQ^{1/2} \mtX(T) \big\|^{2} 
        + \sum_{t=0}^{T-1} \Big( 
              \big\| \mtQ^{1/2} \mtX(t) \big\|^{2} 
            + \big\| \mtR^{1/2} \mtU(t) \big\|^{2} 
          \Big)
\end{equation}
for some given positive semidefinite matrices $\mtQ, \mtR \in \fdR^{N \times N}$ and for some given matrix norm $\|\cdot\|$. We note that if we set $F=G=1$, then the state and control actions become vectors $\vcx(t),\vcu(t) \in \fdR^{N}$, respectively, and by choosing the Euclidean norm, the quadratic cost \eqref{eq:quadraticCost} results in the well-known cost of the traditional LQR problem \citep{AndersonMoore89-LQR}. In this case, the optimal control actions that solve the LQR problem can be computed directly by means of a linear map of the state value, i.e. $\vcu^{\opt}(t) = \mtK_{t}^{\opt} \vcx(t)$, with $\mtK_{t}^{\opt} \in \fdR^{N \times N}$ having a closed-form solution in terms of $\mtA, \mtB, \mtQ, \mtR$ \citep[Ch. 4]{Bertsekas05-DynamicProgramming}. In summary, the objective of the LQR problem is to find a sequence of control actions $\{\mtU(t)\}$ such that \eqref{eq:quadraticCost} is minimized, subject to the dynamics in \eqref{eq:linearSystem}.

In what follows, we focus on the case where the system is distributed in nature. This means that the components of the system can only interact with those other components to which they have a direct connection. To describe this connectivity pattern, we model the system as a graph $\stG = \{\stV, \stE\}$, where $\stV = \{\lmv_{1},\ldots,\lmv_{N}\}$ is the set of nodes with node $\lmv_{i}$ representing the $i^{\text{th}}$ component in the system and where $\stE \subseteq \stV \times \stV$ is the set of edges with $(\lmv_{i},\lmv_{j}) \in \stE$ if and only if the $i^{\text{th}}$ component is connected to the $j^{\text{th}}$ one. The distributed nature of the system typically has an impact on the structure of the system matrix $\mtA$ and the control matrix $\mtB$ [cf. \eqref{eq:linearSystem}], usually respecting the topology of the graph (i.e. the sparsity of the adjacency matrix). More generally, we assume that the matrices $\mtA$ and $\mtB$ share the sparsity of some $k$-hop shortest path between neighbors, i.e. that the $(i,j)$ entry of the matrix may be nonzero if and only if there is a $k$-hop path between $\lmv_{i}$ and $\lmv_{j}$ for some (unknown) value of $k$. We note that this may lead to non-sparse matrices, but that still exhibit a strong structure related to the underlying connectivity of the system. Examples include, both discrete-time \citep{Gama19-GLLN} and continuous-time \citep{Olfati07-Consensus} diffusion models, as well as heat processes \citep{Thanou17-Heat}, among others \citep{Gama19-Control}.

Another fundamental aspect where the distributed nature of the system has a key impact is on the controller. More specifically, we concentrate on finding controllers that minimize the quadratic cost \eqref{eq:quadraticCost} and whose action can be computed by means of operations that respect the connectivity of the system. We consider that the computational operations required to obtain each component's control action $\vcu_{i}(t)$ from the state $\mtX(t)$ only involve information relied by other components to which component $i$ is connected. We call them \emph{distributed} controllers and denote them by
\begin{equation} \label{eq:decentralizedController}
    \mtU(t) = \fnPhi\big(\mtX(t); \stG\big)
\end{equation}
where $\fnPhi: \fdR^{N \times F} \to \fdR^{N \times G}$ and where the mapping is explicitly parametrized by the graph $\stG$. We have assumed, to ease the learning process, that the distributed controller is static.

The optimal distributed LQR controller can then be found by solving the following problem:
\begin{align}
    \min_{\fnPhi \in \fdPhi_{\stG}}\  & \fnJ \Big( \{\mtX(t)\}_{t=0}^{T}, \{\mtU(t)\}_{t=0}^{T-1} ; \mtQ, \mtR \Big) \label{eq:decentralizedLQRobj} \\
    \text{s. t. } & \mtU(t) = \fnPhi\big(\mtX(t); \stG \big) \label{eq:decentralizedLQRconstraint}
\end{align}
where $\fdPhi_{\stG}$ is the space of all possible mappings $\fdR^{N \times F} \to \fdR^{N \times G}$ that respect the structure of the graph.
Solving problem \eqref{eq:decentralizedLQRobj}-\eqref{eq:decentralizedLQRconstraint} requires solving an optimization problem over the field of functions $\fdPhi_{\stG}$. This is mathematically intractable in the general case \citep{Jahn07-NonlinearOptimization}, and requires specific approaches involving variational methods \citep{Cassel13-VariationalMethods}, dynamic programming \citep{Bertsekas05-DynamicProgramming} or kernel-based functions \citep[Ch. 14]{Murphy12-ProbabilisticML}. However, when we drop the distributed constraint \eqref{eq:decentralizedLQRconstraint}, problem \eqref{eq:decentralizedLQRobj} can actually be solved resulting in a linear controller; but this is a \emph{centralized} controller that does not respect the distributed nature of the system. In any case, methods for solving \eqref{eq:decentralizedLQRobj}-\eqref{eq:decentralizedLQRconstraint} require large datasets and exhibit poor generalization \citep[Ch. 6]{Goodfellow16-DeepLearning}.

Considering the inherent complexities of functional optimization, a popular approach is to adopt a specific model for the representation map $\fnPhi$ \citep[Ch. 2]{Anthony99-NeuralNetworkLearning}, leading to a parametric family of representations. Then, finding the best representation amounts to finding the optimal set of parameters, which results in a more tractable optimization problem over a finite-dimensional space \citep[Ch. 9]{Engl96-Inverse}. One such parametrization is that of distributed linear controllers $\fnPhi(\mtX(t);\stG) = \mtH(\stG) \mtX(t) \mtbH$, where we optimize over the space of all matrices $\mtH(\stG) \in \fdR^{N \times N}$ (and $\mtbH \in \fdR^{F \times G}$) that respect the connectivity pattern of the system. Many properties of this parametric family of controllers have been studied, including stability, robustness and (sub)optimality \citep{Fazelnia17-LowerBoundLQR, Fattahi19-LQR}.

However, it is known that even a linear system like \eqref{eq:linearSystem} may have a nonlinear optimal controller if we force a distributed nature on it \citep{Witsenhausen68-Counterexample}. This suggests that it would be more convenient to work with nonlinear parametrizations, rather than linear ones. In particular, we focus on graph neural networks (GNNs) \citep{Bruna14-DeepSpectralNetworks, Defferrard17-ChebNets, Gama19-Archit} as they are nonlinear mappings that exhibit several desirable properties. Fundamentally, they are naturally computed by means of local and distributed operations. This implies that any controller that is parametrized by means of a GNN respects the distributed nature of the system (as given by the graph $\stG$), naturally incorporating the distributed constraint into the chosen parametrization.


\section{Graph Neural Networks} \label{sec:GNN}



Graph signal processing (GSP) is a convenient framework to describe distributed problems \citep{Sandryhaila13-DSPG, Shuman13-SPG}. For a given graph $\stG = \{\stV, \stE\}$, we define a \emph{graph signal} as the mapping $\fnx: \stV \to \fdR^{F}$ that assigns an $F$-dimensional vector to each node, $\fnx(\lmv_{i}) = \vcx_{i} \in \fdR^{F}$. We can thus conveniently describe a graph signal by means of a matrix $\mtX \in \fdR^{N \times F}$ so that its $i^{\text{th}}$ row corresponds to the value of the signal at that node, $\fnx(\lmv_{i}) = \vcx_{i}$. It is immediate that we can consider both the system state $\mtX(t)$ and the control action $\mtU(t)$ as time-varying graph signals, $\fnx_{t}: \stV \to \fdR^{F}$ and $\fnu_{t}: \stV \to \fdR^{G}$, respectively [cf. \eqref{eq:stateMatrix}, \eqref{eq:linearSystem}].

Describing a graph signal by means of a matrix $\mtX$ is mathematically convenient but, in doing so, we rescind the relationship between the signal and the underlying graph support. In other words, the matrix $\mtX$ contains no information about the graph. To recover this relationship, we start by defining the \emph{support matrix} $\mtS \in \fdR^{N \times N}$, which respects the sparsity pattern of the graph, i.e. $[\mtS]_{ij} = 0$ for $i \neq j$ whenever $(\lmv_{j},\lmv_{i}) \notin \stE$. Therefore, the matrix $\mtS$ represents the underlying graph support, and examples in the literature include the adjacency \citep{Sandryhaila13-DSPG} and the Laplacian \citep{Shuman13-SPG} matrices, as well as their normalized counterparts \citep{Defferrard17-ChebNets}.

The support matrix $\mtS$ can then be used to define a linear mapping such that the output $\mtS \mtX$ is another graph signal whose values are related to the underlying graph support. More specifically, the $f^{\text{th}}$ output value at the $i^{\text{th}}$ component is given by
\begin{equation} \label{eq:graphShift}
    [\mtS \mtX]_{if} = \sum_{j=1}^{N} [\mtS]_{ij} [\mtX]_{jf} = \sum_{j : \lmv_{j} \in \stN_{i}} [\mtS]_{ij} [\mtX]_{jf}
\end{equation}
where $\stN_{i} = \{\lmv_{j}: (\lmv_{j},\lmv_{i}) \in \stE\} \cup \{\lmv_{i}\}$ is the neighborhood of node $\lmv_{i}$. Note that while the first equality corresponds to the definition of the matrix multiplication, the second equality holds because of the sparsity pattern of $\mtS$, i.e. the only nonzero entries in $\mtS$ correspond to those nodes that share an edge. In short, the linear map \eqref{eq:graphShift} yields an output graph signal that is a linear combination of neighboring values of the input graph signal.

The support matrix $\mtS$ acts as an elementary operator between graph signals. More precisely, it is a proper generalization of the unit time-shift (or time-delay) operator in traditional signal processing; it just shifts the signal through the graph (diffuses the signal), often receiving the name of \emph{graph shift operator} (GSO). Therefore, we can use $\mtS$ as the basic building block to construct linear graph filters as follows \citep{Sandryhaila13-DSPG}:
\begin{equation} \label{eq:graphFilter}
    \fnH(\mtX;\mtS) = \sum_{k=0}^{K} \mtS^{k} \mtX \mtH_{k}
\end{equation}
for some polynomial order $K$ and where $\mtH_{k} \in \fdR^{F \times G}$ is the corresponding \emph{filter tap} (or filter coefficient). Certainly, the output of the graph filtering operation \eqref{eq:graphFilter} is another graph signal, but with $G$ values at each node, so that $\fnH:\fdR^{N \times F} \to \fdR^{N \times G}$ is a linear map between graph signals.

Linear graph filters as in \eqref{eq:graphFilter} are local and distributed operations. To understand this, note that matrix multiplications to the left of $\mtX$ compute linear combinations of signal values across different nodes, while matrix multiplications to the right of $\mtX$ compute linear combinations of signal values within the same node. Therefore, for the operation to be local, multiplications on the right need to respect the sparsity of the graph, i.e. only combine values of nodes that are connected to each other (multiplications to the left can be arbitrary). This is precisely the case when multiplying by $\mtS^{k} = \mtS(\mtS^{k-1})$ since the operation can be computed by $k$ repeated exchanges with one-hop neighbors. Additionally, by storing the values of the filter taps $\{\mtH_{k}\}$, each node can compute the corresponding output separately by leveraging the information provided by the $k$ exchanges with its one-hop neighbors. Thus, a linear graph filter is a local and distributed operation, making them well suited for learning linear distributed controllers.

The graph filter \eqref{eq:graphFilter} is a proper generalization of the convolution operation and, thus, is usually referred to as a \emph{graph convolution} as well \citep{Gama20-GNNs}. Technically speaking, \eqref{eq:graphFilter} is a finite-impulse response (FIR) graph filter. However, due to the finite nature of graphs, it also encompasses infinite-impulse response (IIR) graph filters. In what follows, we focus on filters of the form \eqref{eq:graphFilter} and generically refer to them as graph filters.

Graph filters \eqref{eq:graphFilter} naturally model linear distributed controllers, i.e. $\mtU(t) = \fnH(\mtX;\mtS)$. However, we are interested in learning distributed controllers that are capable of capturing nonlinear relationships between the state and the control action. Towards this end, we introduce GNNs \citep{Gama20-GNNs}, which cascade $L$ layers, each applying a graph filter followed by a pointwise nonlinearity
\begin{equation} \label{eq:GNN}
    \fnPhi(\mtX; \mtS) = \mtX_{L} \quad , \quad \mtX_{\ell} = \sigma \Big( \fnH_{\ell}(\mtX_{\ell-1}; \mtS) \Big)
\end{equation}
for $\ell=1,\ldots,L$, with $\mtX_{0} = \mtX$ being the input graph signal. The function $\sigma: \fdR \to \fdR$ is a pointwise nonlinearity that acts on each entry of the graph signal (in a slight abuse of notation, we write $\sigma(\mtX)$ to denote $[\sigma(\mtX)]_{ij} = \sigma([\mtX]_{ij})$). The output of each layer is a graph signal $\mtX_{\ell} \in \fdR^{N \times F_{\ell}}$ with $F_{0} = F$ and $F_{L} = G$. The specific nonlinearity $\sigma$ to be used, the size of the graph signals $F_{\ell}$, and the number of filters taps $K_{\ell}$ are design choices. We remark that once these values are chosen, the space of possible GNN-based controllers is completely characterized by the values of the filters taps $\{\mtH_{\ell k}\}_{\ell,k}$ at each layer \citep{Gama20-GNNs}.

The GNN \eqref{eq:GNN} exhibits several desirable properties for learning distributed nonlinear controllers. Fundamentally, due to the pointwise nature of the nonlinearity, they retain the local and distributed nature of graph filters. This means that their output can be computed separately at each node, by exchanging information with one-hop neighbors only. Additionally, they are permutation equivariant and Lipschitz continuous to changes in the graph support $\mtS$ \citep{Gama20-Stability}. These two properties show how the GNNs effectively exploit the graph structure of the system to improve learning (and thus, they are expected to work better when the dynamics are also graph-dependent), and facilitate scalability and transferability. Finally, we note that while permutation equivariance and Lipschitz continuity to perturbations are properties also exhibited by linear graph filters, the GNNs leverage the nonlinearity to increase the discriminative power, helping to capture more information that graph filters do \citep{Pfrommer20-Discriminability}.

All GNN-based controllers are naturally distributed. However, we are interested in those that exhibit a small quadratic cost. That is, we are interested in finding the appropriate filter taps such that \eqref{eq:quadraticCost} is minimized
\begin{align}
\min_{\{\mtH_{\ell k}\}_{\ell, k}}\  & \fnJ \Big( \{\mtX(t)\}_{t=0}^{T}, \{\mtU(t)\}_{t=0}^{T-1} ; \mtQ, \mtR \Big) \label{eq:decentralizedLQRGNNobj} \\
\text{s. t. } & \mtU(t) = \fnPhi\big(\mtX(t); \mtS \big). \label{eq:decentralizedLQRGNNconstraint}
\end{align}
Note that problem \eqref{eq:decentralizedLQRGNNobj}-\eqref{eq:decentralizedLQRGNNconstraint} is a finite-dimensional optimization one that has $\sum_{\ell=1}^{L} F_{\ell} F_{\ell-1} K_{\ell}$ dimensions (independent of the size of the system $N$). Note, also, that the distributed constraint \eqref{eq:decentralizedLQRconstraint} has been incorporated by forcing $\fnPhi$ to be a GNN \eqref{eq:GNN} [cf. \eqref{eq:decentralizedLQRGNNconstraint}].

Problem \eqref{eq:decentralizedLQRGNNobj}-\eqref{eq:decentralizedLQRGNNconstraint} is nonconvex due to the GNN-based controller constraint \eqref{eq:decentralizedLQRGNNconstraint}. Thus, to approximately solve this problem, we leverage an empirical risk minimization (ERM) approach that is typical in learning \citep{Vapnik00-StatisticalLearning}. To do this, we create a \emph{training set} $\stT = \{\mtX_{1,0}, \ldots, \mtX_{|\stT|,0}\}$ containing $|\stT|$ samples $\mtX_{p,0}$ drawn independently from some distribution $\fnp$, which we consider to be different random initializations of the system. We then focus on the ERM problem given by
\begin{equation} \label{eq:decentralizedERM}
\begin{aligned}
\min_{\{\mtH_{\ell k}\}_{\ell, k}}\  & \sum_{p=1}^{|\stT|} \fnJ \Big( \{\mtX_{p}(t)\}_{t=0}^{T}, \{\mtU_{p}(t)\}_{t=0}^{T-1} ; \mtQ, \mtR \Big) \\
\text{s. t. } & \mtU_{p}(t) = \fnPhi\big(\mtX_{p}(t); \mtS \big) \\
    & \mtX_{p}(t+1) = \mtA \mtX_{p}(t) + \mtB \mtU_{p}(t) \mtbB\ , \ \mtX_{p}(0) = \mtX_{p,0}.
\end{aligned}
\end{equation}
Recall that $\mtA, \mtB, \mtQ, \mtR$ are given by the problem, and $\mtS$ is an appropriately chosen support matrix (i.e. adjacency, Laplacian, etc.). We solve \eqref{eq:decentralizedERM} by means of an algorithm based on stochastic gradient descent \citep{Kingma15-ADAM}, efficiently computing the gradient of $\fnJ$ with respect to the parameters $\mtH_{\ell k}$ by means of the back-propagation algorithm \citep{Rumelhart86-BackProp}. To estimate the performance of the \emph{learned} controllers (i.e. those obtained by solving \eqref{eq:decentralizedERM}), we generate a new set of initial states, called the test set, and compute the average quadratic cost \eqref{eq:quadraticCost} on the resulting trajectories. In essence, we transform the optimization problem \eqref{eq:decentralizedLQRGNNobj}-\eqref{eq:decentralizedLQRGNNconstraint} into a \emph{self-supervised} ERM problem \eqref{eq:decentralizedERM} that is solved through simulated data.

Henceforth, we focus on a two-layer GNN-based controller given by [cf. \eqref{eq:GNN}]
\begin{equation} \label{eq:GNNcontroller}
    \mtU(t) = \fnH_{2} \Big( \sigma \big( \fnH_{1} (\mtX(t); \mtS) \big); \mtS\Big).
\end{equation}
We do so because, in practice, each communication exchange between one-hop neighbors takes time. Thus, having many layers and/or large values of $K_{\ell}$ may require unrealistically fast communications or would only be applicable to particularly slow processes. For more details on time-varying graph signals and communication delays, please refer to \citet{Isufi19-Forecasting, Gama20-GNNs, Gama20-LearningControl}.

We now give a sufficient condition for a GNN-based learned controller to stabilize the system.
%
\begin{proposition} \label{prop:stability}
    Consider a linear dynamical system \eqref{eq:linearSystem} controlled by $\mtU(t) = \fnPhi(\mtX(t)) + \mtE(t)$ with $\fnPhi$ the GNN-based controller given by \eqref{eq:GNNcontroller} and with a nonlinearity $\sigma$ such that $|\sigma(x)| \leq |x|$ for all $x \in \fdR$; and where $\mtE(t)$ is a disturbance term or exploratory signal that satisfies $\sum_{t=0}^{\infty} \|\mtE(t)\| < \infty$. Then, the closed-loop system is input-state stable, i.e there exist constants $\beta_{0},\beta_{1} \geq 0$ such that
    \begin{equation} \label{eq:L2gain}
    \sum_{t=0}^{\infty} \| \mtX(t) \| \leq \beta_{0} + \beta_{1} \sum_{t=0}^{\infty} \| \mtE(t) \|
    \end{equation}
    as long as $\fnH_{1}$ and $\fnH_{2}$ satisfy
    \begin{equation} \label{eq:systStability}
    b c_{2}c_{1a} + c_{2}\scbc_{1b} < \big( 1-a \big) \big( 1 - c_{2} \scbc_{1b} \big)
    \end{equation}
    where
    \begin{equation} \label{eq:constants}
    a = \| \mtA \| \ , \   b = \| \mtB \| \ , \ c_{2} = \sum_{g=1}^{G} c_{2}^{g} \ , \  c_{1a} = \sum_{g=1}^{F_{1}} c_{1a}^{g} \ ,\
    \scbc_{1b}  = \sum_{g=1}^{F_{1}} \scbc_{1b}^{g}
    \end{equation}
    for $\|\cdot\|$ is the spectral norm, and with $\mtH_{\ell}^{fg}(\mtS) = \sum_{k=0}^{K_{\ell}} [\mtH_{\ell k}]_{fg} \mtS^{k}$ being a polynomial built with the $(f,g)$ entries of the filter taps $\{\mtH_{k}\}$, $\mtbH_{1k} = \mtbB \mtH_{1k}$ and
    \begin{equation}
        c_{2}^{g} = \max_{f=1,\ldots,F_{1}} \| \mtH_{2}^{fg}(\mtS) \| \ , \ c_{1a}^{g} = \max_{f=1,\ldots,F} \| \mtH_{1}^{fg} (\mtS) \mtA \| \,\
        \scbc_{1b}^{g} = \max_{f=1,\ldots,G} \| \mtbH_{1}^{fg} (\mtS) \mtB \|.
    \end{equation}
\end{proposition}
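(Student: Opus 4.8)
The plan is to establish the $\ell^1$ (input-to-state) bound \eqref{eq:L2gain} by reducing the closed loop to a coupled scalar recursion for the magnitudes of the state and control signals, and then invoking a small-gain/contraction argument whose feasibility is exactly \eqref{eq:systStability}. Throughout I would measure a matrix by the sum of the Euclidean norms of its columns, since this is the norm under which both the graph-filter bound \eqref{eq:graphFilter} and the nonlinearity compose cleanly; the spectral norm appearing in \eqref{eq:L2gain} then follows up to a dimensional constant that I absorb into $\beta_0,\beta_1$ (which are only required to exist).

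First I would expose the matrices $\mtA$ and $\mtbB$ inside the controller. Substituting the dynamics \eqref{eq:linearSystem}, $\mtX(t)=\mtA\mtX(t-1)+\mtB\mtU(t-1)\mtbB$, into the first-layer filter $\fnH_1(\mtX(t);\mtS)=\sum_k \mtS^k\mtX(t)\mtH_{1k}$ and using $\mtbH_{1k}=\mtbB\mtH_{1k}$, the $f$-th column of the pre-activation $\mtY(t)=\fnH_1(\mtX(t);\mtS)$ splits as $\mtY(t)_{:,f}=\sum_{h}\mtH_1^{hf}(\mtS)\mtA\,\mtX(t-1)_{:,h}+\sum_{g}\mtbH_1^{gf}(\mtS)\mtB\,\mtU(t-1)_{:,g}$. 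Bounding each summand by the operator norms collected in \eqref{eq:constants} and summing over $f$ yields $\|\mtY(t)\|\le c_{1a}\|\mtX(t-1)\|+\scbc_{1b}\|\mtU(t-1)\|$. This one-step substitution is precisely what makes the products $\mtH_1^{fg}(\mtS)\mtA$ and $\mtbH_1^{fg}(\mtS)\mtB$, rather than the bare filters, the relevant gains, and it absorbs the right factor $\mtbB$ into $\mtbH_1$.

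Next I would propagate this bound through the nonlinearity and the second layer. Because $|\sigma(x)|\le|x|$, the map $\sigma$ is non-expansive in the column-wise Euclidean norm, so $\|\sigma(\mtY(t))\|\le\|\mtY(t)\|$; bounding $\fnH_2$ by its coefficients gives $\|\fnPhi(\mtX(t))\|\le c_2\|\mtY(t)\|$. Combining these and adding the disturbance through $\mtU(t)=\fnPhi(\mtX(t))+\mtE(t)$ produces the control recursion $\|\mtU(t)\|\le c_2 c_{1a}\|\mtX(t-1)\|+c_2\scbc_{1b}\|\mtU(t-1)\|+\|\mtE(t)\|$, while \eqref{eq:linearSystem} directly gives a state recursion of the form $\|\mtX(t)\|\le a\|\mtX(t-1)\|+b\|\mtU(t-1)\|$ once the remaining $\mtbB$ factor is folded into the control-magnitude bookkeeping. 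Together these are a nonnegative two-dimensional linear recursion in $(\|\mtX(t)\|,\|\mtU(t)\|)$ forced by $\|\mtE(t)\|$.

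Finally I would close the loop. Condition \eqref{eq:systStability} is a clean sufficient condition that simultaneously forces $a<1$, forces the self-loop gain $c_2\scbc_{1b}<1$, and guarantees that the gain matrix of the coupled recursion has spectral radius below one (equivalently, that the associated $I-M$ is a nonsingular M-matrix); summing the induced geometric bounds over $t$ then delivers \eqref{eq:L2gain}, with $\beta_0$ collecting the $\|\mtX(0)\|$ contribution and $\beta_1$ the input gain. I expect the main obstacle to be the bookkeeping rather than any single hard estimate: keeping the feature dimensions $F,F_1,G$ straight through the two filters, choosing the column-wise norm so that the non-expansive nonlinearity and the filter bounds chain without spurious constants, correctly absorbing $\mtbB$ into $\mtbH_1$, and checking that the through-state gain $b c_2 c_{1a}$ and the self-loop gain $c_2\scbc_{1b}$ combine into exactly the margin demanded by \eqref{eq:systStability}.
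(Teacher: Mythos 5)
Your proposal matches the paper's proof essentially step for step: the same column-sum norm $\|\mtX\|_{2,1}=\sum_{f}\|\vcx^{f}\|$, the same substitution of the dynamics \eqref{eq:linearSystem} into the first-layer filter (the paper's recurrent rewriting \eqref{eq:GRNN}, with $\mtbB$ absorbed via $\mtbH_{1k}=\mtbB\mtH_{1k}$), the same gains $c_{2}$, $c_{1a}$, $\scbc_{1b}$, and the same reduction to a nonnegative $2\times 2$ comparison recursion with gain matrix built from $a$, $b$, $\alpha=c_{2}c_{1a}$, $\beta=c_{2}\scbc_{1b}$ forced by $\|\mtE(t)\|$. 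The only cosmetic difference is the closing step: where you invoke the spectral-radius/nonsingular-M-matrix criterion for that matrix, the paper diagonalizes it explicitly, computes the eigenvalues $\big((a+\beta)\pm((a-\beta)^{2}+4b\alpha)^{1/2}\big)/2$, and reduces $|\lambda_{i}|<1$ to the same inequality $b\alpha<(1-a)(1-\beta)$ — an equivalent computation (and note that your implicit side conditions $a<1$, $\beta<1$ when passing to the squared form are exactly the ones the paper also assumes tacitly in its squaring step).
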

\begin{proof}
    See Supplementary Material.
\end{proof}
%
Proposition~\ref{prop:stability} is a sufficient condition for the closed-loop system to be input-state stable. We note that the condition on the nonlinearity is mild and is satisfied by most popular nonlinearities (ReLUs, $\tanh$, $\sigmoid$, etc.). We remark that \eqref{eq:systStability} is a conservative bound, in that systems that do not satisfy it may be input-state stable as well.


\section{Numerical Experiments} \label{sec:sims}



We showcase the performance of GNNs for learning decentralized controllers in the linear-quadratic problem. We consider the traditional finite-time horizon formulation,
\begin{equation} \label{eq:LQsingle}
\begin{aligned}
    \min\quad & \vcx(T)^{\Tr} \mtQ \vcx(T) + \sum_{t=0}^{T-1} \Big( \vcx(t)^{\Tr} \mtQ \vcx(t) + \vcu(t)^{\Tr} \mtR \vcu(t) \Big) \\
    \text{subject to } & \vcx(t+1) = \mtA \vcx(t) + \mtB \vcu(t) \ , \ t = 0,1,\ldots,T-1
\end{aligned}
\end{equation}
for some given matrices $\mtA, \mtB, \mtQ, \mtR \in \fdR^{N \times N}$ with $\mtQ , \mtR$ being positive definite. We note that $\vcx(t),\vcu(t) \in \fdR^{N}$ so that \eqref{eq:LQsingle} is a particular case of \eqref{eq:linearSystem}-\eqref{eq:quadraticCost} with $F=G=1$, where $\mtbB = \scbb$ has been absorbed in $\mtB$, and where we use the $\ell_{2}$ norm in \eqref{eq:quadraticCost}. In short, we consider the state and the control action of each node to be a scalar, $[\vcx(t)]_{i} = x_{i}(t)$ and $[\vcu(t)]_{i}=u_{i}(t)$, respectively.

\medskip\noindent \textbf{Setting.} We consider a system with $N$ nodes placed uniformly at random in a plane, and we build the corresponding graph $\stG$ by keeping only the $5$-nearest neighbors of each node. We adopt a support matrix $\mtS$ given by the adjacency matrix and normalized by the largest eigenvalue $\|\mtS\|_{2}=1$. The system matrices $\mtA$ and $\mtB$ are chosen at random and we set $\mtQ = \mtR = \mtI$. Unless otherwise specified, we set $N=20$ nodes, $T=50$ instances, and the matrices $\mtA$ and $\mtB$ are made to share the same eigenvector basis as $\mtS$, while the eigenvalues are chosen at random using a zero-mean unit-variance Gaussian, and are then normalized to have $\|\mtA\|_{2} = 0.995$ and $\|\mtB\|_{2} = 1$.

\medskip\noindent \textbf{Controllers.} We consider $5$ different controllers. (i: Optim) As a baseline, we use the optimal, linear and centralized controller, which can be computed recursively by knowing $\mtA,\mtB,\mtQ,\mtR$. (ii: MLP) We consider a \emph{learned} centralized controller given by a two-layer fully-connected neural network with $F_{\text{MLP}}N$ hidden units in the first layer, $N$ output units, and a $\fnsigma_{\text{MLP}}$ nonlinearity. (iii: D-MLP) The decentralized controller in \citep{Huang05-LargeScaleDecentralized} which assigns an individual two-layer fully connected neural network to each node with $\fnsigma_{\text{D-MLP}}$ nonlinearity. (iv: GF) A decentralized linear graph filter bank controller [cf. \eqref{eq:graphFilter}] consisting of a cascade of two banks with $F_{0}=F_{2}= 1, F_{1} = F_{\text{GF}}$ and $K_{1} = K_{\text{GF}}, K_{2}=0$. (v: GNN) A two-layer GNN [cf. \eqref{eq:GNN}, \eqref{eq:GNNcontroller}] with $F_{0}=F_{2} = 1, F_{1} = F_{\text{GF}}$, and $K_{1} = K_{\text{GF}}, K_{2}=0$, and a $\fnsigma_{\text{GNN}}$ nonlinearity. Unless otherwise specified, all nonlinearities are $\tanh$, and we set $F_{\text{MLP}} = F_{\text{GNN}} = 32$, $F_{\text{GF}} = 16$ and $K_{\text{GF}} = K_{\text{GNN}} = 3$.

\medskip\noindent \textbf{Training and evaluation.} For training, we consider the ERM equivalent problem of \eqref{eq:LQsingle} [cf. \eqref{eq:decentralizedLQRGNNobj}-\eqref{eq:decentralizedLQRGNNconstraint}, \eqref{eq:decentralizedERM}] which we minimize over a training set consisting of $|\stT|$ initial states $\vcx_{p,0}$, $p=1,\ldots,|\stT|$, with entries randomly distributed following a zero-mean unit variance Gaussian. We adopt an ADAM optimizer with learning rate $\mu$ and forgetting factors $0.9$ and $0.999$ \citep{Kingma15-ADAM}. The training procedure consists of $30$ epochs with a batch size of $20$. Moreover, every $5$ training steps we run an evaluation over a validation set consisting of $50$ samples. After the training is finished, we retain the model parameters that have resulted in the lowest validation cost. For evaluation, we create a test set of $50$ samples in analogous fashion and run the resulting controllers. For each controller, we compute the cost given by \eqref{eq:LQsingle} for each trajectory, and average over all $50$ trajectories. For a fair comparison, we normalize the cost by the lower bound for decentralized controllers provided in \citep{Fazelnia17-LowerBoundLQR}. To account for the randomness in the data generation, for each experiment we run $10$ different graph realizations, and we run $10$ different instances of matrices $\mtA,\mtB$ for each graph realization. The reported results include the average over these realizations as well as the estimated standard deviation. Unless otherwise specified, we set $|\stT|=500$ and $\mu=0.01$ (except for training MLP where we set $\mu = 0.001$).

\begin{table}[t]
    \scriptsize
    \centering
    \caption{Average (std. deviation) normalized cost for different hyperparameters. Cost of other controllers: (i: Optim) $0.9961(\pm 0.0007)$, (ii: MLP) $0.999 (\pm 0.002)$, (iii: D-MLP) $1.11 (\pm 0.02)$. }
    \begin{tabular}{c|ccc}
        $F_{\text{GF}} / K_{\text{GF}}$ & $1$ & $2$ & $3$  \\ \hline
        $16$ & $10    (\pm 10)$   & $1.5 (\pm 0.5)$   & $\mathbf{1.21 (\pm 0.05)}$ \\
        $32$ & $1.6  (\pm 0.8)$  & $1.3 (\pm 0.2)$   & $1.3  (\pm 0.2)$ \\
        $64$ & $1.5  (\pm 0.3)$  & $1.6 (\pm 1.0)$   & $2    (\pm 1)$
    \end{tabular}
    \hfill
    \begin{tabular}{c|ccc}
        $F_{\text{GNN}} / K_{\text{GNN}}$ & $1$ & $2$ & $3$  \\ \hline
        $16$ & $1.26 (\pm 0.07)$ & $1.20 (\pm 0.05)$ & $1.18 (\pm 0.04)$ \\
        $32$ & $1.26 (\pm 0.08)$ & $1.19 (\pm 0.05)$ & $\mathbf{1.17 (\pm 0.05)}$ \\
        $64$ & $1.26 (\pm 0.07)$ & $1.20 (\pm 0.05)$ & $1.18 (\pm 0.05)$
    \end{tabular}
    \label{tab:hparam}
\end{table}

\medskip\noindent \textbf{Experiment 1: Design hyperparameters.} In the first experiment, we study the dependency of (iv: GF) and (v: GNN) with the number of features $F_{\text{GF}},F_{\text{GNN}}$ and number of filter taps $K_{\text{GF}},K_{\text{GNN}}$. Results are shown in Table~\ref{tab:hparam}. First, it can be observed that for the GF controller, the behavior with $F_{\text{GF}}$ and $K_{\text{GF}}$ is erratic, as evidenced not only by the different average costs, but also by the larger standard deviations. Notice that considering information from farther away neighbors (increasing $K_{\text{GF}}$) improves performance. The performance of the GNN controller, on the other hand, is more consistent and exhibits relatively good results in all cases. This does not appear to be affected by the number of chosen features $F_{\text{GNN}}$, but it improves with increasing the neighborhood information $K_{\text{GNN}}$. We have also ran experiments for different values of learning rate $\mu \in \{0.001, 0.005, 0.01\}$, and observed that this hyperparameter considerably impacts the performance of the learned controllers. In particular, the GF controller fails for almost all values of $F_{\text{GF}}$ and $K_{\text{GF}}$ when $\mu \in \{0.001, 0.005\}$. The GNN controller, on the other hand, succeeds for other values of $\mu$ but at a slightly higer cost ($1.21 (\pm 0.05)$ for $\mu = 0.001$ and $1.18 (\pm0.05)$ for $\mu = 0.005$). From this experiment, we adopt the values of $F_{\text{GF}} = 16, F_{\text{GNN}} = 32, K_{\text{GF}} = K_{\text{GNN}} = 3$ and $\mu = 0.01$ to be used from now on.

\medskip\noindent \textbf{Experiment 2: Comparison between controllers.} In the above experiment, we also computed the optimal centralized controller (i: Optim), a learnable centralized controller (ii: MLP) and a learnable decentralized controller (iii: D-MLP). We ran the methods for different values of $\mu$ and $F_{\text{MLP}}$ and kept the ones with the best performance, namely $\mu = 0.001$ and $F_{\text{MLP}} = 32$ for MLP and $\mu = 0.01$ for D-MLP. The normalized costs obtained are $0.9961 (\pm 0.0007)$ for Optim, $0.999 (\pm 0.002)$ for MLP and $1.11(\pm 0.02)$ for D-MLP. We note that the centralized controllers perform better, as expected, and, in fact, they have a lower cost than the lower bound \citep{Fazelnia17-LowerBoundLQR}. This is expected since the space of centralized controllers contains the space of decentralized ones, and thus any optimal centralized controller is bound to be at least as good as any optimal decentralized one. Next, we observe that D-MLP has a lower cost than GF and GNN. This is also expected since the representation space of D-MLP contains that of GNNs. However, as we see in Experiment 4, this controller does not scale, since the optimization space grows with the size of the network, making it increasingly difficult to navigate during the training process.

\begin{figure}[t]
    \centering
    \includegraphics[width=0.3\textwidth]{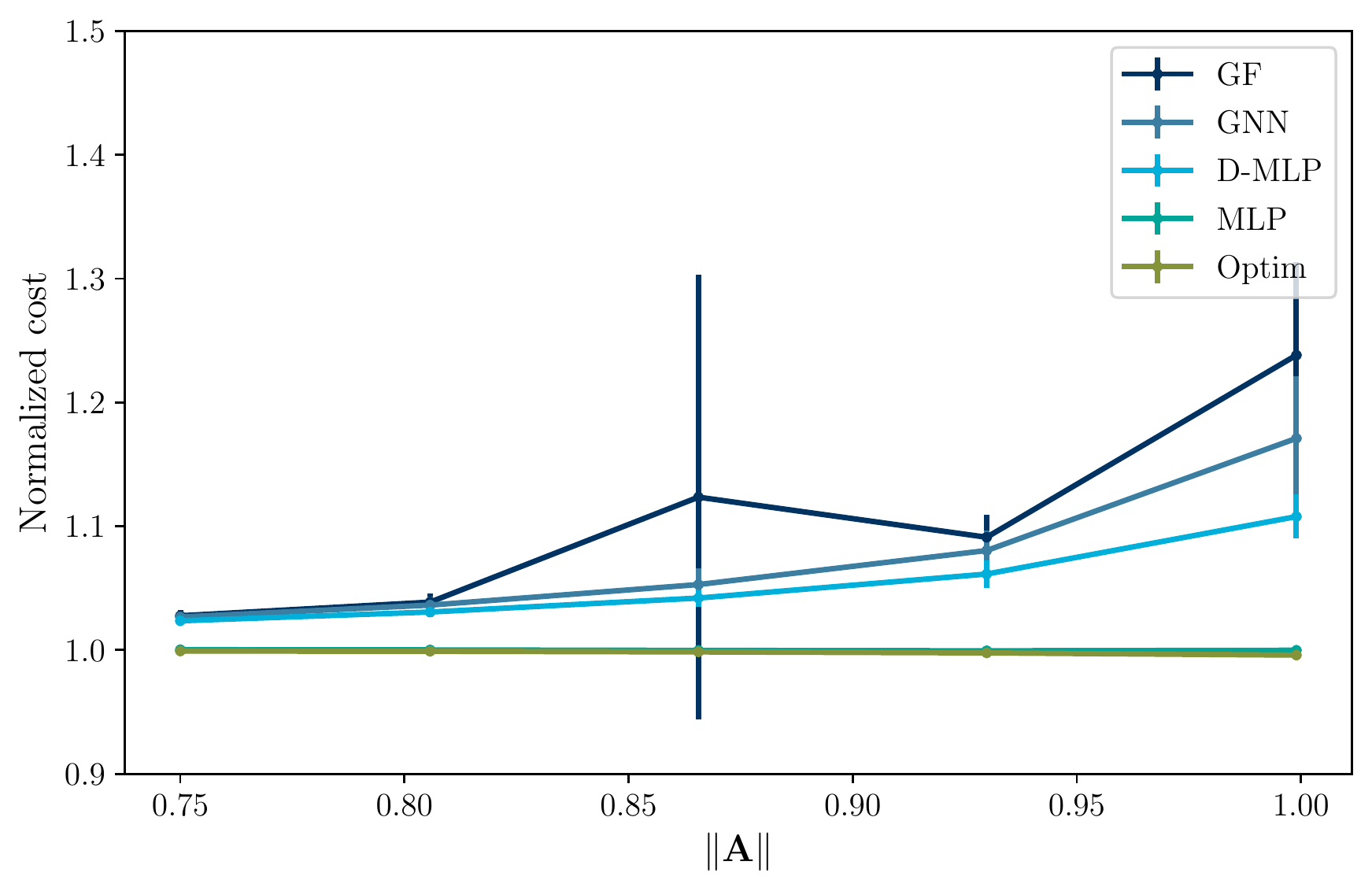}
    \hspace{0.5cm}
    \includegraphics[width=0.3\textwidth]{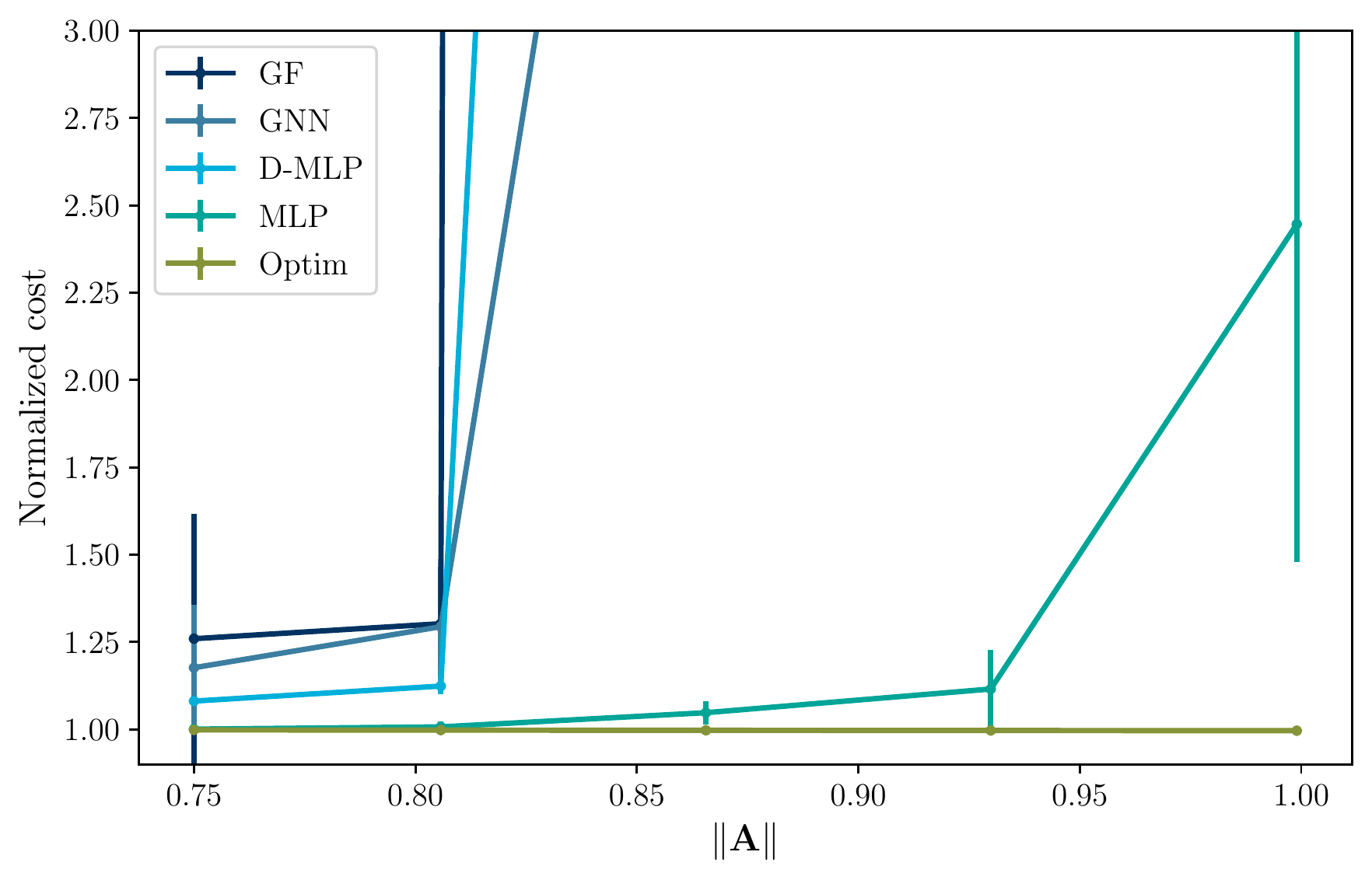}
    \hspace{0.5cm}
    \includegraphics[width=0.3\textwidth]{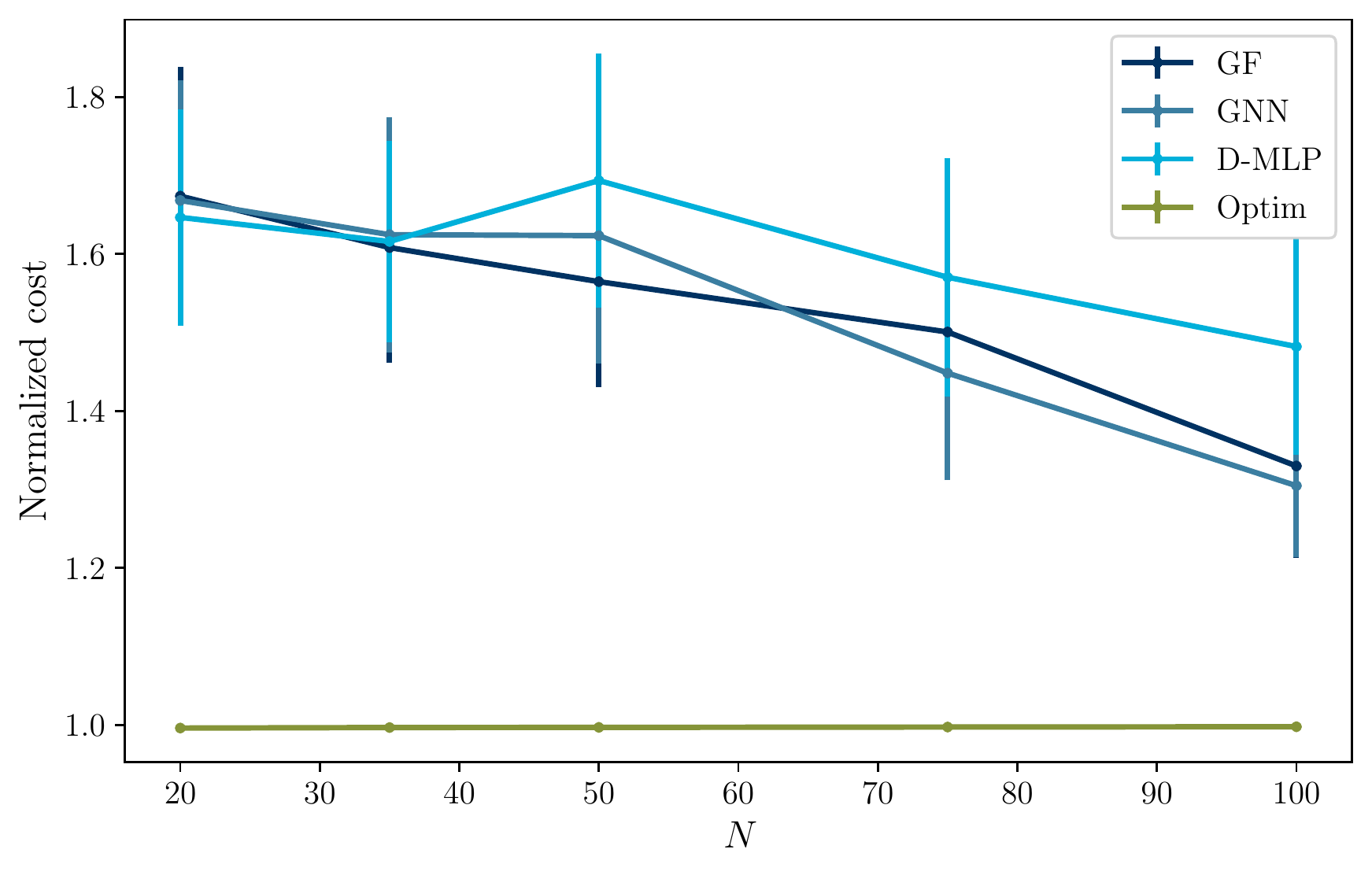}
    \caption{(Left) Change in cost for structured matrix $\mtA$. (Middle) Change in cost for random matrix $\mtA$. The higher the value of $\|\mtA\|$ the harder it is to control the system. We see that when the system matrix shares the same structure as the graph (i.e. same eigenvectors), the controllers are better at learning how to handle the system. We see that, when the structure is random, all decentralized controllers fail to learn appropriate actions when $\|\mtA\| > 0.8$. (Right) Change in cost for increasingly bigger systems. We observe that the D-MLP controller does not transfer at scale since, for bigger systems, it yields a higher cost than the GNN-based controller.}
    \label{fig:allResults}
\end{figure}

\medskip\noindent \textbf{Experiment 3: Dependence on the system matrix $\mtA$.} In this experiment, we analyze the ability of the learned controller to successfully handle different system matrices $\mtA$. We consider, first, the case when $\mtA$ and $\mtS$ have the same set of eigenvectors; and, second, when $\mtA$ is completely random and bears no relationship with the underlying graph support. Additionally, we consider different values of $\|\mtA\|$ where we recall that the larger the value of $\|\mtA\|$ is, the harder the system is to control. The results are shown in Fig.~\ref{fig:allResults}. First, it is evident that when the system dynamics share the same structure as the underlying graph support, the decentralized learning methods are capable of capturing this structure to improve their performance. Even as $\|\mtA\|$ gets closer to $1$ and the system becomes less stable [cf. Prop.~\ref{prop:stability}], the learned controllers work well. Second, when the system matrix $\mtA$ is completely arbitrary, then the decentralized controllers fail.

\medskip\noindent \textbf{Experiment 4: Transferability and scalability.} In the last experiment, we consider the case in which we train the learnable controllers in a graph with $N=20$ nodes, but then we test it on increasingly bigger systems of $N \in \{35, 50, 75, 100\}$ nodes. The results show in Fig.~\ref{fig:allResults} illustrate that, while the D-MLP performs better when tested in a small system, it does not transfer to larger systems. This is because it assigns a different fully connected neural network controller to each component. Thus, when tested on larger systems, it has to replicate this controller on other components and that may have a substantially different topological neighborhood. GNN-based controller, on the other hand, successfully adapt to larger systems, even when trained on small ones.


\section{Conclusions} \label{sec:conclusions}




Finding optimal distributed controllers is intractable in the most general case, and even in specific cases where the problem admits a convex formulation, the resulting controller does not scale up. Since it is expected that the optimal distributed controller is a nonlinear function of the states, we proposed to adopt a GNN to parametrize the controller. GNNs are well suited since they are nonlinear, naturally distributed, computationally efficient, scale and transfer. We observed in numerical experiments that they exhibit good performance.

This preliminary investigation of GNN-based controllers shows their potential for distributed control and opens up several avenues of future research. First, the sufficient condition for closed-loop stability is conservative and can be improved. Second, we can analyze the robustness of the GNN-based controller to changes in the system connectivity as well as in the matrix that describe the linear dynamics. Third, we can investigate the suboptimality of the GNN-based controllers.


\bibliography{bibFiles/myIEEEabrv,bibFiles/bibLQRGNN}


\clearpage
\pagenumbering{arabic}
\renewcommand*{\thepage}{A\arabic{page}}
\newpage

\begin{center}
    {\large \bf `Graph Neural Networks for Distributed Linear-Quadratic Control'\\Supplementary Materials}    
\end{center}

\appendix



\section{Proof of Proposition~\ref{prop:stability}}

For graph signals, we consider the norm given by $\| \mtX \|_{2,1} = \sum_{f=1}^{F} \| \vcx^{f}\|$, where $\vcx^{f} \in \fdR^{N}$ are the columns of $\mtX$ and $\|\cdot\|$ is the Euclidean vector norm. To prove Proposition~\ref{prop:stability}, consider the exploratory controller $\mtU(t) = \fnPhi(\mtX(t);\mtS)+\mtE(t)$ where $\fnPhi(\mtX(t);\mtS)$ is the two-layer GNN \eqref{eq:GNNcontroller} and $\mtE(t) \in \fdR^{N \times G}$ is an exploratory graph signal such that $\sum_{t=0}^{\infty} \|\mtE(t)\| < \infty$. The dynamics of the system then become
\begin{equation}
    \mtX(t) = \mtA \mtX(t-1) + \mtB \mtPhi(t-1) \mtbB+ \mtB \mtE(t-1) \mtbB
\end{equation}
where $\mtPhi(t-1) = \fnPhi(\mtX(t-1);\mtS)$ is the two-layer GNN \eqref{eq:GNNcontroller}.
To bound $\|\mtX(t)\|$, note that due to from the dynamics equation \eqref{eq:linearSystem}, $\mtX(t)$ depends on $\mtX(t-1)$, $\mtPhi(t-1)$ and $\mtE(t-1)$. Likewise, since the GNN-based controller \eqref{eq:GNNcontroller} can be seen as a \emph{(graph) recurrent neural network} ((G)RNN) \citep{Ruiz20-GRNN}
\begin{equation} \label{eq:GRNN}
\mtPhi(t) = \fnH_{2} \Big( \sigma \Big( \fnH_{1} \big(\mtA\mtX(t-1) \big) +  \fnH_{1} \big(\mtB\mtPhi(t-1)\mtbB \big) + \fnH_{1} \big(\mtB\mtE(t-1) \mtbB \big)\Big) \Big).
\end{equation}
which also depends on $\mtX(t-1)$, $\mtPhi(t-1)$ and $\mtE(t-1)$, and where we have dropped the explicit dependence of $\fnH_{\ell}$ on $\mtS$ for ease of exposition. This leads to a system of equations given by
\begin{align} \label{eq:systEqs}
& \mtX(t) = \mtA \mtX(t-1) + \mtB \mtPhi(t-1) \mtbB+ \mtB \mtE(t-1) \mtbB \\
& \mtU(t) = \fnH_{2} \Big( \sigma \Big( \fnH_{1} \big(\mtA\mtX(t-1) \big) +  \fnH_{1} \big(\mtB\mtPhi(t-1)\mtbB \big) + \fnH_{1} \big(\mtB\mtE(t-1) \mtbB \big)\Big) \Big) \nonumber
\end{align}
Now, we can apply norm of $\|\mtX(t)\|$ and bound it above by the triangular inequality and the submultiplicativity of the operator norm to arrive at
\begin{equation} \label{eq:ineqSignal}
\begin{aligned}
\| \mtX(t)\| &  \leq \| \mtA\| \sum_{f=1}^{F} \| \vcx^{f}(t-1)\| + \| \mtB \| \|\mtbB\|  \sum_{g=1}^{G} \| \vcphi^{g}(t-1)\| + \| \mtB \| \|\mtbB\|  \sum_{g=1}^{G} \| \vce^{g}(t-1)\| \\
\|\mtX(t)\|  & \leq \| \mtA \| \| \mtX(t-1)\| + \| \mtB \| \|\mtbB\| \| \mtPhi(t-1)\|+ \| \mtB \| \|\mtbB\| \| \mtE(t-1)\|
\end{aligned}
\end{equation}
which depends on $\|\mtX(t-1)\|$ recursively, but also on $\|\mtPhi(t-1)\|$ and $\|\mtE(t-1)\|$. We can then bound $\|\mtPhi(t)\|$ by applying the triangular inequality and leveraging the hypothesis $|\sigma(x)| \leq |x|$ for all $x \in \fdR$. First, we consider bounding filter $\fnH_{2}$ as follows:
\begin{equation} \label{eq:ineqControlH2init}
\| \mtU(t)\| = \sum_{g=1}^{G} \|\sum_{f=1}^{F_{1}} \mtH_{2}^{fg} (\mtS) \vcx_{1}^{f} \|  \leq \sum_{g=1}^{G} \sum_{f=1}^{F_{1}} \| \mtH_{2}^{fg}(\mtS)\| \| \vcx_{1}^{f} \| 
\end{equation}
where $\vcx_{1}^{f}$ is the $f^{\text{th}}$ column of the output of the first layer, $\mtX_{1} = \sigma(\fnH_{1}(\mtX(t);\mtS))$. Define $c_{2}^{g} = \max_{f=1,\ldots,F_{1}} \| \mtH_{2}^{fg}(\mtS)\|$ so that $\|\mtH_{2}^{fg}(\mtS) \| \leq c_{2}^{g}$ for all $f=1,\ldots,F_{1}$. Then, \eqref{eq:ineqControlH2init} becomes
\begin{equation} \label{eq:ineqControlH2}
\| \mtU(t)\|  \leq \sum_{g=1}^{G} c_{2}^{g}\sum_{f=1}^{F_{1}}  \| \vcx_{1}^{f} \|  = c_{2} \| \mtX_{1}\|
\end{equation}
where $c_{2} = \sum_{g=1}^{G} c_{2}^{g}$ as defined in \eqref{eq:constants}. Now, we proceed to bound
\begin{equation} \label{eq:X1}
\mtX_{1} = \sigma\Big( \fnH_{1} \big(\mtA \mtX(t-1) \big) + \fnH_{1}\big(\mtB \mtPhi(t-1) \mtbB\big) + \fnH_{1}\big(\mtB \mtE(t-1) \mtbB\big) \Big).
\end{equation}
Using the inequality $|\sigma(x)| \leq |x|$ together with the triangular inequality, we obtain
\begin{equation} \label{eq:X1sigma}
\|\mtX_{1}\| \leq \big\| \fnH_{1} \big(\mtA \mtX(t-1) \big) \big\|+\big\| \fnH_{1}\big(\mtB \mtPhi(t-1) \mtbB\big) \big\|+ \big\| \fnH_{1}\big(\mtB \mtE(t-1) \mtbB\big) \big\|.
\end{equation}
For the first term in \eqref{eq:X1sigma}, we have 
\begin{equation} \label{eq:X1firstInit}
\big\| \fnH_{1} \big(\mtA \mtX(t-1) \big) \big\|  = \sum_{g=1}^{F_{1}} \Big\| \sum_{f=1}^{F} \mtH_{1}^{fg}(\mtS) \mtA \vcx^{f}(t-1) \Big\| \leq \sum_{g=1}^{F_{1}} \sum_{f=1}^{F} \| \mtH_{1}^{fg}(\mtS) \mtA \| \|\vcx^{f}(t-1) \|
\end{equation}
and by defining $c_{1a}^{g}=\max_{f=1,\ldots,F} \| \mtH_{1}^{fg}(\mtS) \mtA\|$, we can write
\begin{equation} \label{eq:X1first}
\big\| \fnH_{1} \big(\mtA \mtX(t-1)\big) \big\| \leq \sum_{g=1}^{F_{1}} c_{1a}^{g} \sum_{f=1}^{F} \|\vcx^{f}(t-1) \| = c_{1a} \| \mtX(t-1)\|
\end{equation}
where $c_{1a}$ is defined in \eqref{eq:constants}. Now, for the second term, we first study the filtering stage
\begin{equation} \label{eq:X1secondFilter}
\fnH_{1} \big(\mtB \mtPhi(t-1) \mtbB\big) = \sum_{k=0}^{K_{1}} \mtS^{k} \mtB \mtPhi(t-1) \mtbB \mtH_{1k} = \sum_{k=0}^{K_{1}} \mtS^{k} \mtB \mtPhi(t-1) \mtbH_{1k} = \fnbH_{1} \big( \mtB \mtPhi(t-1)\big)
\end{equation}
with $\mtbH_{1k} = \mtbB \mtH_{1k}$, containing the filter taps of a modified filter $\fnbH_{1}: \fdR^{N \times G} \to \fdR^{N \times F_{1}}$. Then, we can proceed analogously to \eqref{eq:X1firstInit}-\eqref{eq:X1first} to obtain
\begin{equation} \label{eq:X1second}
\big\| \fnH_{1} \big(\mtB \mtPhi(t-1) \mtbB \big) \big\| \leq \sum_{g=1}^{F_{1}} \scbc_{1b}^{g} \sum_{f=1}^{G} \|\vcphi^{f}(t-1) \| = \scbc_{1b} \| \mtPhi(t-1)\|
\end{equation}
where we defined $\scbc_{1b}^{g} = \max_{f=1,\ldots,G} \| \mtbH_{1}^{fg}(\mtS) \mtB\|$ and then used the definition of $\scbc_{1b}$ given in \eqref{eq:constants}. Proceeding analogously to \eqref{eq:X1secondFilter} and \eqref{eq:X1second}, the third term of \eqref{eq:X1sigma} can be bounded as
\begin{equation} \label{eq:X1third}
\big\| \fnH_{1} \big(\mtB \mtE(t-1) \mtbB \big) \big\| \leq \sum_{g=1}^{F_{1}} \scbc_{1b}^{g} \sum_{f=1}^{G} \|\vce^{f}(t-1) \| = \scbc_{1b} \| \mtE(t-1)\|.
\end{equation}
Combining \eqref{eq:X1first}, \eqref{eq:X1second} and \eqref{eq:X1third}, we bound $\mtX_{1}$ \eqref{eq:X1} by
\begin{equation} \label{eq:X1bound}
\|\mtX_{1}\| \leq c_{1a} \| \mtX(t-1)\|+ \scbc_{1b} \|\mtPhi(t-1)\|+ \scbc_{1b} \|\mtE(t-1)\|.
\end{equation}
Using \eqref{eq:X1bound} in \eqref{eq:ineqControlH2}, we obtain
\begin{equation} \label{eq:ineqControl}
\| \mtU(t)\| \leq c_{2} \Big( c_{1a} \| \mtX(t-1)\| + \scbc_{1b} \| \mtU(t-1)\| + \scbc_{1b} \| \mtE(t-1)\| \Big).
\end{equation}

Considering \eqref{eq:ineqSignal} and \eqref{eq:ineqControl} jointly leads to a linear system of inequalities. To see this, define
\begin{equation} \label{eq:ineqDefs}
\begin{gathered}
x_{t} = \| \mtX(t)\| \quad , \quad \scphi_{t} = \|\mtPhi(t)\| \quad , \quad e_{t} = \|\mtE(t) \| \\ a =\|\mtA\| \quad , \quad b = \|\mtB\| \|\mtbB\| \quad , \quad \alpha = c_{2} c_{1a} \quad , \quad \beta = c_{2} \scbc_{1b}
\end{gathered}
\end{equation}
Note that $a,b$ are given by the problem, while $\alpha,\beta$ include the learnable parameters from the corresponding filters. With these definitions in place, we can write a system of linear inequalities as
\begin{equation} \label{eq:systIneqs}
\begin{bmatrix}
x_{t} \\
\scphi_{t}
\end{bmatrix}
\leq
\begin{bmatrix}
a & b \\
\alpha & \beta
\end{bmatrix}
\begin{bmatrix}
x_{t-1} \\
\scphi_{t-1}
\end{bmatrix}
+
\begin{bmatrix}
    b \\
    \beta
\end{bmatrix}
e_{t}
.
\end{equation}
By repeating this equation, we arrive at
\begin{equation} \label{eq:systIneqsInit}
\begin{bmatrix}
x_{t} \\
\scphi_{t}
\end{bmatrix}
\leq
\begin{bmatrix}
a & b \\
\alpha & \beta
\end{bmatrix}^{t}
\begin{bmatrix}
x_{0} \\
\scphi_{0}
\end{bmatrix}
+
\sum_{\tau = 0}^{t-1}
\begin{bmatrix}
    a & b \\
    \alpha & \beta
\end{bmatrix}^{\tau}
\begin{bmatrix}
    b \\
    \beta
\end{bmatrix}
e_{t-\tau-1}
\end{equation}
with $\scphi_{0} \leq c_{2} c_{1} x_{0}$, with $c_{1} = \sum_{g=1}^{F_{1}} \max_{f=1,\ldots,F} \| \mtH_{1}^{fg}(\mtS)\|$.

Assuming that the matrix in \eqref{eq:systIneqsInit} is diagonalizable, we can then rewrite
\begin{equation} \label{eq:matrixExp}
\begin{bmatrix}
a & b \\
\alpha & \beta
\end{bmatrix}^{t}
=
\begin{bmatrix}
v_{11} & v_{12} \\
v_{21} & v_{22}
\end{bmatrix}
\begin{bmatrix}
\lambda_{1}^{t} & 0 \\
0 & \lambda_{2}^{t}
\end{bmatrix}
\begin{bmatrix}
w_{11} & w_{12} \\
w_{21} & w_{22}
\end{bmatrix}
\end{equation}
where $v_{ij} = v_{ij}(a,b,\alpha,\beta)$ are the four elements of the eigenvector matrix, $\lambda_{i}=\lambda_{i}(a,b,\alpha,\beta)$ are the eigenvalues, and $w_{ij} = w_{ij}(a,b,\alpha,\beta)$ are the four elements of the inverse of the eigenvector matrix. This leads the first term in \eqref{eq:systIneqsInit} to
\begin{equation} \label{eq:systIneqsInitExp}
\begin{aligned}
& \begin{bmatrix}
x_{t} \\
\scphi_{t}
\end{bmatrix}
\leq
\begin{bmatrix}
v_{11} & v_{12} \\
v_{21} & v_{22}
\end{bmatrix}
\begin{bmatrix}
\lambda_{1}^{t} & 0 \\
0 & \lambda_{2}^{t}
\end{bmatrix}
\begin{bmatrix}
w_{11} & w_{12} \\
w_{21} & w_{22}
\end{bmatrix}
\begin{bmatrix}
x_{0} \\
\scphi_{0}
\end{bmatrix}
\\
& \begin{bmatrix}
x_{t} \\
\scphi_{t}
\end{bmatrix}
\leq
\begin{bmatrix}
v_{11} \lambda_{1}^{t}\big(w_{11}x_{0} + w_{12} \scphi_{0}\big)+ v_{12}\lambda_{2}^{t} \big(w_{21}x_{0} + w_{22} \scphi_{0} \big) \\
v_{21} \lambda_{1}^{t}\big(w_{11}x_{0} + w_{12} \scphi_{0}\big)+ v_{22}\lambda_{2}^{t} \big(w_{21}x_{0} + w_{22} \scphi_{0} \big)
\end{bmatrix}.
\end{aligned}
\end{equation}
From the first entry, we can find the bound on $x_{t}$ to be
\begin{equation} \label{eq:xtBoundControl}
x_{t} \leq \big(v_{11}w_{11} \lambda_{1}^{t} + v_{12}w_{21}\lambda_{2}^{t} \big) x_{0} + \big(v_{11}w_{12}  \lambda_{1}^{t} +v_{12}w_{22}\lambda_{2}^{t}\big) \scphi_{0}.
\end{equation}
Furthermore, the inequality $\scphi_{0} \leq c_{2}c_{1}  \scphi_{0}$ yields that
\begin{equation} \label{eq:xtBound}
x_{t} \leq \Big(v_{11}w_{11} \lambda_{1}^{t} + v_{12}w_{21}\lambda_{2}^{t}  + c_{2}c_{1} \big(v_{11}w_{12}  \lambda_{1}^{t} +v_{12}w_{22}\lambda_{2}^{t} \big) \Big) x_{0} 
\end{equation}
which can be conveniently regrouped as
\begin{equation} \label{eq:xtBoundTime}
x_{t} \leq \Big(\big(v_{11}w_{11} +c_{2}c_{1} v_{11}w_{12}\big) \lambda_{1}^{t} +\big( v_{12}w_{21}  + c_{2}c_{1} v_{12}w_{22} \big) \lambda_{2}^{t}\Big) x_{0}.
\end{equation}
Now, we can sum for all time instants as follows
\begin{equation} \label{eq:xtSumBound}
\sum_{t=1}^{\infty} x_{t} \leq  \Big( \big(v_{11}w_{11} + c_{2}c_{1} v_{11}w_{12}\big) \sum_{t=1}^{\infty}\lambda_{1}^{t} +\big( v_{12}w_{21} + c_{2}c_{1} v_{12}w_{22} \big)\sum_{t=1}^{\infty} \lambda_{2}^{t}\Big) x_{0}.
\end{equation}
For \eqref{eq:xtSumBound} to be finite, then $|\lambda_{1}| <1$ and $|\lambda_{2}|<1$. Leveraging that the eigenvalues of a $2 \times 2$ matrix are solutions to $\lambda^{2} - \trace[\mtA]\lambda + \det(\mtA) = 0$, we obtain
\begin{equation} \label{eq:charPol}
(a-\lambda)(\beta - \lambda)-b\alpha = \lambda^{2} - (a+\beta)\lambda+ (a\beta - b\alpha) = 0
\end{equation}
The roots of this polynomial are
\begin{equation} \label{eq:charPolRoots}
\lambda_{i} = \frac{(a+\beta) \pm \Big((a+\beta)^{2} - 4 (a \beta - b\alpha)\Big)^{1/2}}{2}
\end{equation}
which, noting that $(a+\beta)^{2} = a^{2} + 2a\beta+\beta^{2}$ and thus $(a+\beta)^{2}-4(a\beta-b\alpha) = a^{2} - 2a\beta+\beta^{2}+4b\alpha$, is equivalent to
\begin{equation} \label{eq:eigenvalues}
\lambda_{i} = \frac{(a+\beta) \pm \Big((a-\beta)^{2} + 4 b\alpha\Big)^{1/2}}{2}.
\end{equation}
For \eqref{eq:xtSumBound} to be finite the magnitude of both eigenvalues \eqref{eq:eigenvalues} has to be strictly smaller than $1$. By denote by $\gamma = ((a-\beta)^{2} + 4 b\alpha)^{1/2}$, this is equivalent to
\begin{equation} \label{eq:eigenvalueConditions}
-2 \leq a+\beta+\gamma < 2 \quad , \quad -2 \leq a+\beta-\gamma < 2
\end{equation}
which is, in turn, equivalent to
\begin{equation} \label{eq:eigenvalueConditionsGamma}
-2 - \gamma \leq a+\beta < 2 - \gamma \quad , \quad -2+\gamma \leq a+\beta < 2 + \gamma.
\end{equation}
Note that $a = \|\mtA\| \geq 0$ and $\beta = c_{2} \scbc_{1b} \geq 0$, and so is $\gamma \geq 0$. Thus, it always holds that $a+\beta \geq -(2+\gamma)$. Likewise, observe that either $2-\gamma > 0$ or $a+\beta = 0$ (which can only happen if both $a = \beta = 0$). Thus, $2-\gamma < 2+\gamma$. Therefore, the conditions in \eqref{eq:eigenvalueConditionsGamma} become
\begin{equation} \label{eq:eigenvalueConditionsSimple}
a+\beta < 2 - \gamma \quad , \quad \gamma < 2.
\end{equation}
Now, note that the first condition is $a+\beta+\gamma < 2$. However, since $a\geq 0$, $\beta \geq 0$ and $\gamma \geq 0$, this condition already implies that $\gamma < 2$. Thus, the conditions for convergence of \eqref{eq:xtSumBound} can be reduced to
\begin{equation} \label{eq:stabilityCondition}
a+\beta+\gamma < 2.
\end{equation}
Recalling that $\gamma = ((a-\beta)^{2} + 4 b\alpha)^{1/2}$, we can write
\begin{equation} \label{eq:stabilityConditionSqrt}
((a-\beta)^{2} + 4 b\alpha)^{1/2} < 2 - a - \beta.
\end{equation}
Noting that the left hand side of \eqref{eq:stabilityConditionSqrt}, we can square both terms to obtain
\begin{equation}
(a-\beta)^{2} + 4 b\alpha < (2 - (a + \beta))^{2}.
\end{equation}
Expanding both sides yields
\begin{equation}
a^{2}-2a\beta+\beta^{2} + 4 b\alpha < 4 - 4 a-4\beta + a^{2}+2a\beta+\beta^{2}
\end{equation}
which is equivalent to
\begin{equation} \label{eq:stabilityConditionSimplified}
a+\beta-a\beta +  b\alpha< 1.
\end{equation}

For the second term in \eqref{eq:systIneqsInit}, leverage \eqref{eq:matrixExp} to get the following bound for the first element in the vector
\begin{equation}
    \sum_{\tau=0}^{t-1} \scxi_{\tau} e_{t-\tau-1} \quad , \quad \scxi_{\tau} = \Big( \big(bv_{11}w_{11} + \beta v_{11}w_{12}  \big) \lambda_{1}^{\tau}  + \big(b v_{12}w_{21} +\beta v_{12}w_{22}\big) \lambda_{2}^{\tau} \Big).
\end{equation}
Now, add up over all time instants to get
\begin{equation} \label{eq:ineqExploratory}
 \sum_{t=0}^{\infty} \sum_{\tau=0}^{t-1} \scxi_{\tau} e_{t-\tau-1} \leq \Big(\sum_{t=0}^{\infty} \scxi_{t} \Big) \Big( \sum_{t=0}^{\infty} e_{t} \Big)
\end{equation}
where the inequality holds as long as both series on the right-hand side are finite. By hypothesis it holds that $\sum_{t=0}^{\infty} e_{t} < \infty$, so the focus is set on
\begin{equation} \label{eq:seriesXi}
\sum_{t=0}^{\infty} \scxi_{t} = \big(bv_{11}w_{11} + \beta v_{11}w_{12}  \big) \sum_{t=0}^{\infty} \lambda_{1}^{t}  + \big(b v_{12}w_{21} +\beta v_{12}w_{22}\big) \sum_{t=0}^{\infty} \lambda_{2}^{t}.
\end{equation}
If the condition \eqref{eq:stabilityConditionSimplified} is satisfied, then it holds that \eqref{eq:seriesXi} is finite, and thus \eqref{eq:ineqExploratory} holds. Therefore, combining \eqref{eq:xtSumBound} and \eqref{eq:ineqExploratory} we prove that the system is input-state stable as per \eqref{eq:L2gain}.

Finally, let us rewrite $\alpha, \beta$ in terms of quantities that depend on the GNN-based controller as follows. Note that \eqref{eq:stabilityConditionSimplified} can be rewritten as
\begin{equation}
-(1-a)\beta + (1-a)>  b\alpha
\end{equation}
with $\beta$ and $\alpha$ containing the learnable parameters (through filters $\fnH_{1}$ and $\fnH_{2}$). Finally, replace $\alpha = c_{2} c_{1a}$ and $\beta=c_{2} \scbc_{1b}$ to obtain
\begin{equation} \label{eq:stabilityConditionNoDivision}
bc_{2} c_{1a} < (1-a) -(1-a)c_{2}\scbc_{1b} = \big( 1-a \big) \big( 1 - c_{2} \scbc_{1b} \big)
\end{equation}
from which the result \eqref{eq:systStability} of Proposition~\ref{prop:stability} follows. \hfill $\blacksquare$

\end{document}